\pdfoutput=1
\newif\ifFull
%\Fullfalse
\Fulltrue
\ifFull
\documentclass[11pt]{article}
\usepackage{fullpage}
\else
\documentclass{sig-alternate}
\fi
\usepackage{graphicx}
\usepackage{cite}
\usepackage{url}
\usepackage{latexsym}
\usepackage{amssymb}
\usepackage{amsmath}
\widowpenalty=5000
\clubpenalty=5000

\newcommand{\R}{{\bf R}}
\ifFull
\newenvironment{proof}{\noindent{\bf Proof:}}{\hspace*{\fill}\rule{6pt}{6pt}\bigskip}
\else
\conferenceinfo{ACM GIS}{'08, November 5-7, 2008. Irvine, CA, USA}
\crdata{978-1-60558-323-5/08/11}
\CopyrightYear{2008}
\fi

\newtheorem{theorem}{Theorem}
\newtheorem{lemma}[theorem]{Lemma}

\newtheorem{corollary}[theorem]{Corollary}

\begin{document}

\title{Studying Geometric Graph Properties of Road Networks \\
       Through an Algorithmic Lens}

\author{%
David Eppstein \\
Dept.~of Computer Science \\
University of California, Irvine \\
% Irvine, CA 92697-3435 \\
\url{http://www.ics.uci.edu/~eppstein}
\and
Michael T. Goodrich \\
Dept.~of Computer Science \\
University of California, Irvine \\
% Irvine, CA 92697-3435 \\
\url{http://www.ics.uci.edu/~goodrich}
}

\date{}
\maketitle 

\begin{abstract}
This paper studies the geometric graph properties of
road networks from an algorithmic
perspective, focusing on empirical studies that yield useful
properties of road networks and how these properties
can be exploited in the design of fast algorithms.
Unlike previous approaches, our study is
not based on the assumption that road networks are planar graphs.
Indeed, based on experiments we have performed on
the road networks of the 50 United States and District of Columbia,
we provide strong empirical evidence that
road networks are quite non-planar.
Our approach instead is directed at finding
algorithmically-motivated properties of road networks as 
non-planar geometric graphs, focusing on alternative
properties of road networks that can still lead to 
efficient algorithms for such problems as shortest paths and Voronoi
diagrams.
In particular, we study road networks as
\emph{multiscale-dispersed} graphs, which is a concept we formalize in
terms of disk neighborhood systems.
This approach allows us to develop fast algorithms for
road networks without making any
additional assumptions about the distribution of edge weights.

\noindent\textbf{Keywords:} road networks, disk neighborhood systems, 
circle arrangements, 
multiscale-dispersed graphs, shortest paths, Voronoi diagrams,
algorithmic lens.
\end{abstract}

\section{Introduction}
With the advent of online mapping systems, including proprietary
systems like Google Maps and
Mapquest, and open-source projects
like \url{http://wiki.openstreetmap.org/},
there is an increased interest in studying
road networks as natural artifacts.
Indeed, the Ninth DIMACS Implementation Challenge\footnote{See
http://dimacs.rutgers.edu/Workshops/Challenge9/}, from 2006, was
dedicated to the algorithmic study 
of road networks from the perspective of methods for
solving shortest path problems on road networks.
(See Figure~\ref{fig-anchorage} for an example extracted from this data.)

\begin{figure}[tb]
\vspace*{-1.15in}
\begin{center}
\includegraphics[width=3.5in]{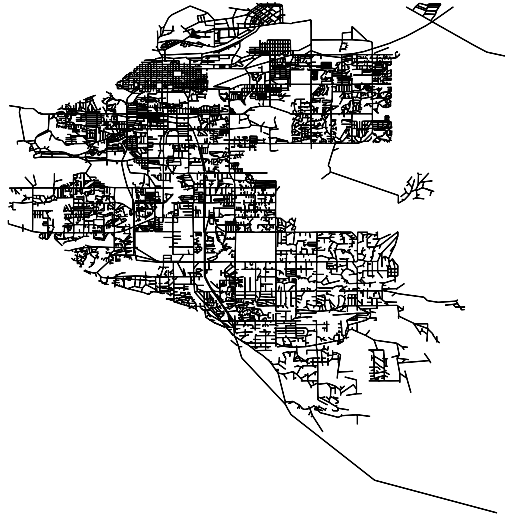} 
\end{center}
\vspace*{-0.35in}
\caption{A portion of the road network 
from the U.S.~Census 2000 TIGER/Line files. 
This example illustrates the
road network for 
Anchorage, Alaska, which consists of over 9,000 vertices and 23,000 edges.
}
\label{fig-anchorage}
\end{figure}

Road networks provide an interesting domain of study, from
an algorithmic perspective,
as they combine geographic and graph-theoretic information in one structure.
In particular, road networks can be viewed as instances of 
\emph{geometric graphs}~\cite{ttgg}, 
that is, graphs in which each vertex 
is associated with a unique point in $\R^2$ and each edge is associated with
a simple curve joining the points associated with its end vertices.
In the case of a road network, we
create a vertex for every road intersection or major jog and
we create an edge for every road segment that joins two such points.
In addition, road networks contain geographic information, in that 
vertices are usually labeled with
their (GPS) longitude and latitude coordinates and edges are labeled
with their length.
These basic data attributes of road networks, together
with the fundamental anthropological nature of the way these
networks encapsulate the way that societies organize themselves,
make road networks unique data objects.
Therefore, following similar studies of natural 
artifacts through the \emph{algorithmic lens}
(e.g., see~\cite{ai-cs2cr-07,ac-itg-05,c-aims-06,%
c-cyihg-06,c-rmgta-06,p-alhcp-07,w-ct-06}),
we are interested in this paper in the study of algorithmically
motivated properties of road networks as geometric graphs.

\subsection{Geometric Graph Properties of Road Networks}
Because roads are built on the surface of the Earth, which as
a first-order approximation is a sphere,
it is tempting to view them as 
plane graphs~\cite{dett-gd-99,eg-sbdgt-70,t-pg-93}, 
that is, graphs that are drawn
on the surface of a sphere without edge crossings.
Indeed, some statistics commonly used
to characterize road networks only make sense for planar graphs
(e.g., see~\cite{c-esag-96}).
For example, the \emph{Alpha index} for a road network 
with $n$ vertices is the ratio of the
number of internal fundamental cycles to $2n-5$, which is 
called ``the maximum
possible,'' but this maximum doesn't hold for non-planar graphs.
Likewise, the \emph{Gamma index} is the ratio of the
number of edges in a road network to $3n-6$, which is another
``maximum possible'' that doesn't hold for non-planar graphs. (The
values $2n-5$ and $3n-6$ come from Euler's formula for planar graphs.)
Such statistics
are often used to measure the density of road networks, and,
as standardized numbers, they probably suffice for this purpose, but
they may also give the false impression that road networks are plane
graphs.

Anyone who has seen an expressway underpass has
firsthand experience of a counter-example to the planarity of road networks.
Of course, if such edge crossings were rare, it
would probably still be okay to think of road networks as plane graphs or
``almost plane'' graphs. 
In fact, as we show in the experimental
results of this paper, real-world road networks can have many edge
crossings. For example, the road network for California---with its
extensive freeway system---has almost 6,000 edge crossings!
Thus, the main goal of this paper is to exploit an alternative 
geometric characterization
of road networks that reflects their non-planar nature. 
Additionally, we wish to avoid making any
assumptions about the distribution of edge weights, because it is 
important in many transportation planning problems to use artificial 
weights that do not reflect the geometry of the input 
and that may vary from user to user~\cite{Epp-SJC-03}.

The approach we study in this paper is based on characterizing
road networks as subgraphs of disk intersection graphs,
in a way that takes
advantage of their bounded-depth fractal nature~\cite{bl-fc-94}.
In particular, we exploit the manner in which road networks reflect
the way populations naturally create geographic features
that are well-separated at multiple scales in a self-similar fashion.
We introduce a formalism that characterizes road networks as
\emph{multiscale-dispersed} graphs, which is itself based on viewing road
networks as subgraphs of disk intersection graphs.

A \emph{disk intersection graph} is defined from a collection $S$ of disks 
in $\R^2$ by creating a vertex for every disk in $S$ and defining an edge 
for every pair of intersecting disks in $S$.
Disk intersection graphs include the planar graphs,
since, by a beautiful result of Koebe~\cite{k-kdka-36}, 
every planar graph can be
represented as the intersection graph of a set of disks that
intersect only at their boundaries. 
Moreover, 
by a result of Mohar~\cite{m-ptcpa-93},
such embeddings can be constructed in polynomial time.
A set $S$ of disks in the plane 
defines a \emph{$k$-ply} neighborhood system if, for
any point $p\in\R^2$, the number of disks from $S$ containing $p$
is at most $k$.
Thus, the collection of ``kissing'' disks used in 
the representation of Koebe~\cite{k-kdka-36} 
for a planar graph defines a $2$-ply neighborhood system.

The natural way to define a 
disk neighborhood system from a road network $G$ is to center a disk at
each vertex $v$ in $G$ and define its radius
to be half the length of the longest edge incident to $v$ in $G$.
The intersection graph of such a disk neighborhood system 
is guaranteed to contain $G$ as a subgraph.
Thus, we define this as the \emph{natural} disk neighborhood system
determined by the road network $G$, and we use this notion throughout
this paper.
(See Figure~\ref{fig-natural}.)

\begin{figure}[htb]
\vspace*{-1.0in}
\begin{center}
\includegraphics[width=3.75in]{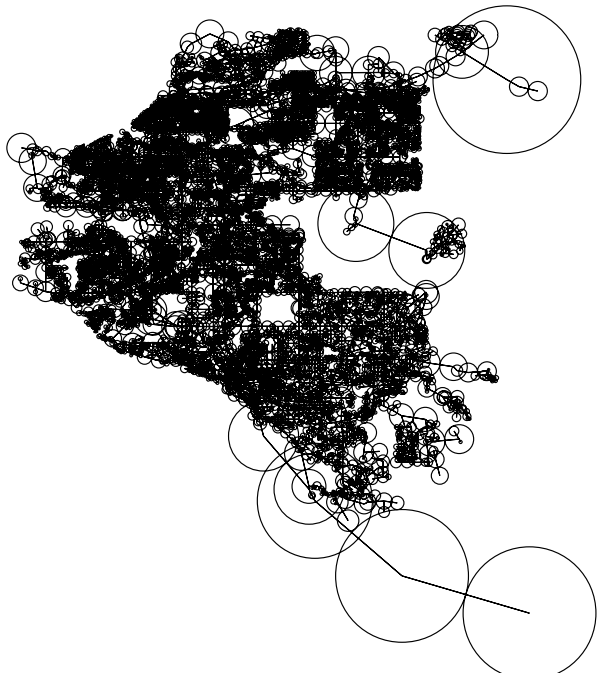} 
\end{center}
\vspace*{-0.15in}
\caption{
The natural disk neighborhood system for the road
network of Anchorage, Alaska, which was presented in
Figure~\protect{\ref{fig-anchorage}}.}
\label{fig-natural}
\end{figure}

We exploit this characterization by showing that it allows us to
utilize a technique used
in many fast algorithms for planar graphs---the
the existence of efficient
methods for finding small-sized separators~\cite{lt-stpg-79}.
These are small sets of vertices whose removal separates the graph into two
subgraphs of roughly equal size.
Interestingly, like planar graphs, $k$-ply
neighborhood systems of disks have small-sized separators 
(e.g., see~\cite{ar-dsa-93,MilTenThu-SJSC-95,mttv-sspnn-97,st-dpps-96}). 
In order to handle the intricacies of road networks, the natural disk 
neighborhood systems of which may not be $k$-ply, we generalize these 
results by showing that small separators exist for a larger 
related class of graphs. 
Moreover, there exist algorithms for finding good separators 
in $k$-ply neighborhood
systems and our generalizations of them, and these algorithms 
can be used to construct
efficient algorithms for such graphs, as we show.

Additional experiments we provide in this paper 
justify the claim that real-world 
road networks have the properties
mentioned above.
Moreover,
we show that our approach
leads to improved algorithms and data structures for dealing
with real-world road networks, including shortest paths and Voronoi
diagrams.
In general, we desire comparison-based
algorithms that require no additional 
assumptions regarding the distribution of edge weights, so that our algorithms
can apply to a wide variety of possible edge weights, including
(non-metric) combinations of distance, 
travel time, toll charges, and subjective scores rating safety
and scenic interest.

\begin{figure*}[tb]
\begin{center}
\includegraphics[width=1.9in]{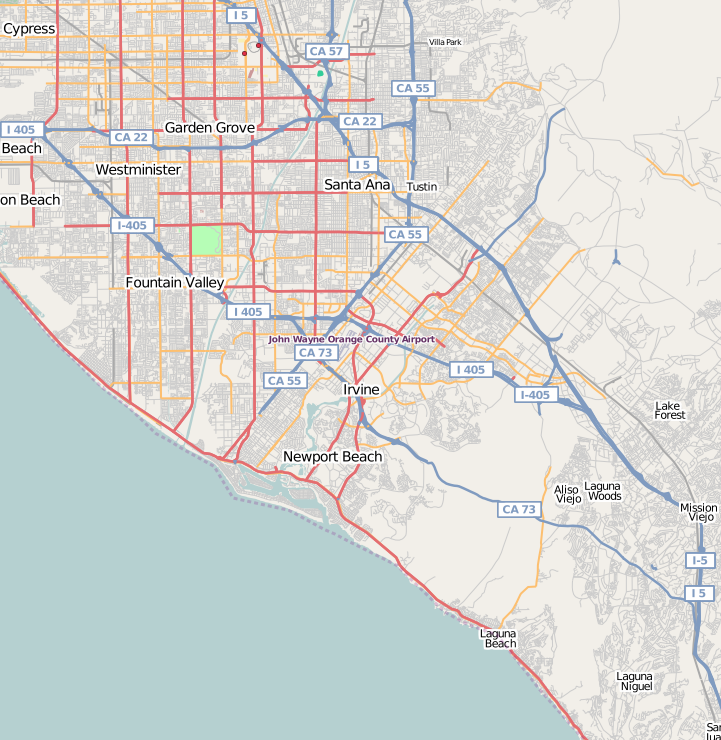} 
\hspace*{0.3in}
\includegraphics[width=1.9in]{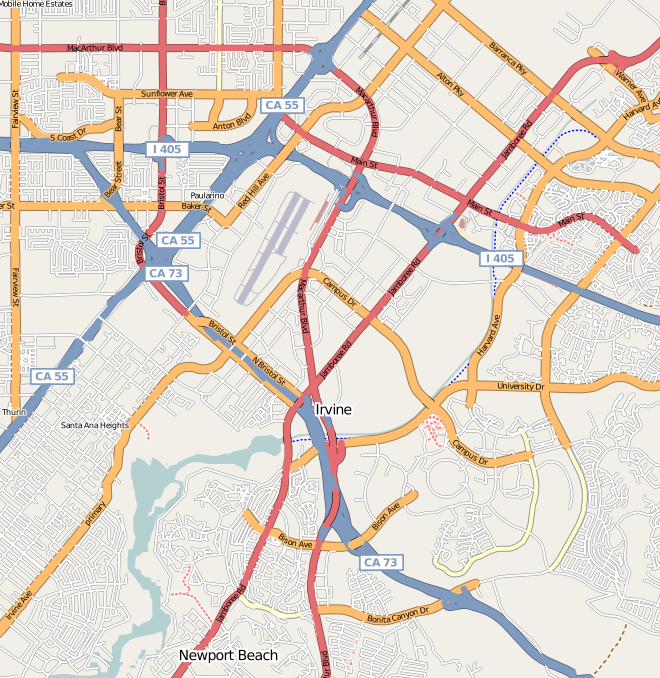} 
\hspace*{0.3in}
\includegraphics[width=1.9in]{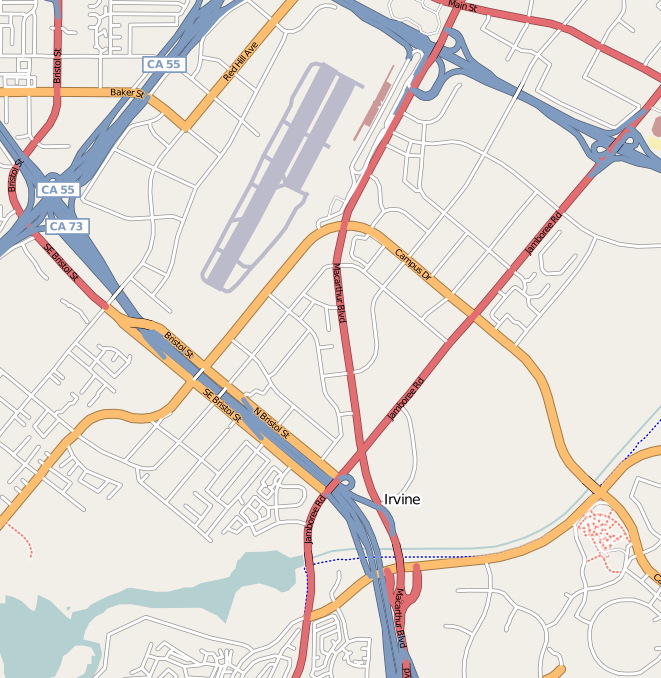} 
\end{center}
\vspace*{-10pt}
\caption{
The multiscale dispersion of real-world maps. 
These three images show 
various scales of maps of Irvine, California.
Images are from
\textsf{http://wiki.openstreetmap.org/},
under the Creative Commons attribution-share alike license.
}
\label{fig-fractal}
\end{figure*}

\subsection{Previous Related Work}
There has been considerable work in the transportation literature on
studies of the properties of road networks (e.g., see~\cite{c-esag-96}),
including analyses based on the Alpha and Gamma indices. As
mentioned above, however, much of this work implicitly assumes that
road networks are planar.

In the algorithms community, there has 
been considerable prior work on shortest
path algorithms for road networks 
and Euclidean graphs 
(e.g., see~\cite{gh-cspasm-05,hsww-csuts-05,%
kp-sppls-06,ss-hhhes-05,sv-speg-86,zn-spaeu-98}
and the program for the 
Ninth DIMACS Implementation Challenge).
This prior work on road network algorithms takes a decidedly
different approach than we take in this paper, however, in that this
prior work focuses on using special properties of the edge weights
that do not hold in the comparison model,
whereas we study road networks as a graph family and desire
properties that would result in efficient
algorithms in the comparison model.

One of the main ingredients
we use in our algorithms is the existence of small separators in
certain graph families (e.g., see~\cite{lt-stpg-79,m-fsscs-86}).
Previous work on separators includes the seminal contribution of
Lipton and Tarjan~\cite{lt-stpg-79}, who showed that $O(\sqrt{n})$-sized
separators exist for $n$-vertex planar graphs and these can be computed
in $O(n)$ time.
Goodrich~\cite{g-psppt-95} shows that recursive
$O(\sqrt{n})$-separator decompositions can be constructed for planar
graphs in $O(n)$ time.
A related concept is that of geometric separators, which use
geometric objects to define separators in geometric graphs
(e.g., see~\cite{ar-dsa-93,MilTenThu-SJSC-95,mttv-sspnn-97,st-dpps-96}), 
for which Eppstein
{\it et al.}~\cite{emt-dltag-93} provide a linear-time construction
algorithm, which translates into an $O(n\log n)$-time recursive separator
decomposition algorithm.

The specific algorithmic problems we study are
well-known in the general algorithms and computational geometry
literatures.
For general graphs with $n$ vertices and $m$ edges, 
excellent work can be found on 
efficient algorithms in the comparison model,
including single-source
shortest paths~\cite{clrs-ia-01,gt-adfai-02,r-rrsss-97}, 
which takes $O(n\log n + m)$
time~\cite{ft-fhtui-87}, and
Voronoi diagrams~\cite{a-vdsfg-91,ak-vd-00},
whose graph-theoretic version
can be constructed in $O(n\log n + m)$ time~\cite{e-tgvda-00,m-afaas-88}.
Note that none of these algorithms run in linear time, even for
planar graphs. Nevertheless, linear-time algorithms for planar graphs
are known for single-source shortest paths~\cite{hkrs-fspap-97},
which unfortunately do not immediately
translate into linear-time algorithms for road networks.
In addition, there are a number of efficient
shortest-path algorithms that make assumptions about 
edge weights~\cite{g-saspp-93,gh-cspasm-05,m-ssspa-01,t-usspp-99};
hence, are not applicable in the comparison model.
Eppstein {\it et al.}~\cite{egs-ltagg-09} show how to find all the
edge interesections in an $n$-vertex straight-line geometric graph
in $O(n + k\log^{(c)} n)$ expected time, where $k$
is the number of pairwise edge intersections and $c$ is any fixed
constant, and they show how this can be combined with the method of
Goodrich~\cite{g-psppt-95} to construct a recursive
$O(\sqrt{n})$-separator decomposition for such a graph
in $O(n)$ time if $k$ is $O(n/\log^{(c)} n)$.
There method does not result in a geometric separator decomposition,
however, as our approach in this paper does.

\subsection{Our Results}
In this paper, we study properties of road networks
that can be exploited in efficient algorithms for such networks.
In particular, we view road networks 
as a special class of non-planar geometric graphs and we show 
how to design efficient algorithms for them that are
based on their characterization as embedded graphs that
that have well-dispersed vertices and edges.
We use the automatic multiscale nature of 
disk neighborhoods to capture 
the property that 
the spatial distribution of edges is similar at multiple scales, 
like fractals, except that this recursive self-similarity 
has a bounded depth~\cite{bl-fc-94}.
(See Figure~\ref{fig-fractal}.)

We provide experimental evidence that debunks the 
belief that road networks are plane
graphs or even ``almost planar.'' Instead, we provide
empirical evidence that real-world road networks have natural disk
neighborhood systems that are subgraphs of
$k$-ply disk intersection graphs (with a small number of exceptions), 
for constant $k$. 
That is, road networks are multiscale-dispersed graphs.
Our analysis uses the U.S.~TIGER/Line road network database, as
provided by the Ninth DIMACS Implementation Challenge,
which is comprised
of over 24 million vertices and 29 million edges.

Viewing road networks as multiscale-dispersed 
graphs allows us to prove that these networks 
have small geometric separators, which can be found quickly.
Furthermore, we show how to use recursive separator decompositions for road
networks in the design of fast
shortest path and Voronoi diagram
algorithms for road networks.
We also provide a fast algorithm for constructing the arrangement of
circles in a natural disk neighborhood system, using additional
algorithmic properties of road networks, which we justify
empirically.
Thus, we justify the claim that viewing road networks through the
algorithmic lens can lead to the discovery of properties of these
networks that result in algorithms that run faster than algorithms
that would simply view these networks as standard graphs.

\begin{figure*}[htb]
%\begin{center}
\vspace*{-0.3in}
%\hspace*{-0.3in}
\includegraphics[width=6in]{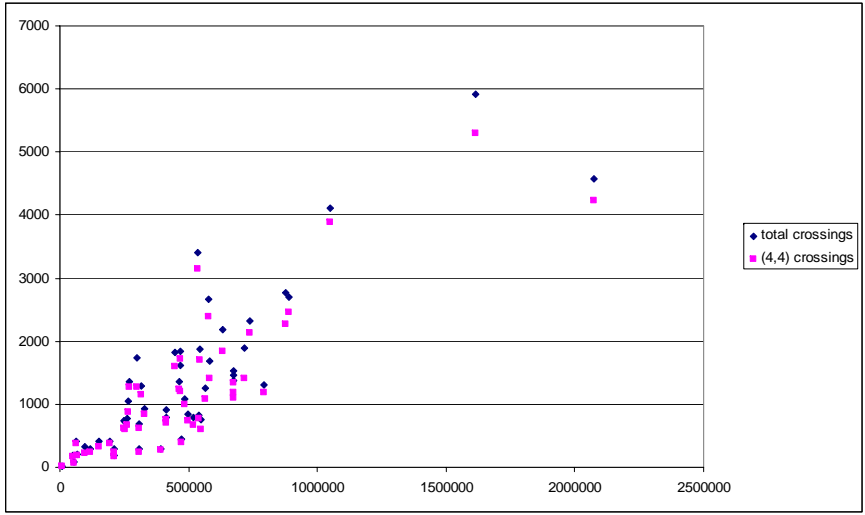}
%\end{center}
\vspace*{-1.34in}
\caption{
A scatter plot of the number of edge crossings for TIGER/Line road
networks, indexed by network size, listed as both total number of
crossings and number of crossings between pairs of edges such that
both are in the highest level of the road network hierarchy.
}
\label{fig-cross}
\end{figure*}

\section{Non-Planarity}

As noted above, every road
network is a graph $G=(V,E)$, having a set, $V$, of $n$ vertices, 
defined by road junctions
and on/off ramps, and a set, $E$,
of $m$ edges, defined by the uninterrupted
pieces of roads, highways, and expressways between such vertices.
Since the number of roads that meet at a single junction cannot
be arbitrarily large, road networks have \emph{bounded degree}. Thus,
the number, $m$, of edges in a road network is $O(n)$.
For this reason, we will focus on $n$ as the key factor defining
the ``size'' of a road network.

As mentioned above, road networks are \emph{geometric graphs}~\cite{ttgg}. 
That is, they are ``drawn'' on a surface, so that 
vertices are placed at distinct points and edges correspond to simple
curves, which in this paper we assume are straight line segments.
In addition, the edges incident to each edge
are given in a specific order, e.g., clockwise or counter-clockwise.
In addition to this ordering information, road networks also have
geographic information: namely, each vertex in a road network is given 
two-dimensional GPS coordinates, which in turn imply the line
segments defined by edges.
If none of these segments cross each other, and the surface is a sphere or a plane, 
then the embedding is said to 
be a \emph{plane graph}~\cite{dett-gd-99,eg-sbdgt-70,t-pg-93}.
Graphs that can be drawn 
as plane graphs are called \emph{planar graphs}.

As mentioned above, much previous work has been done that 
studies road networks as
plane graphs. The first set of experimental results we share 
in this paper show that in the road networks of the 50 
United States and District
of Columbia are not plane graphs.
Indeed, these experiments show that none of these graphs are even
``almost planar'' graphs. In actuality, however, every road network
in the TIGER/Line database contains multiple edge crossings, as shown in
the plot of Figure~\ref{fig-cross}.
Indeed, some road networks have thousands of crossings.
In general, the road networks tested have a number 
of crossings that appears to be
proportional to $\sqrt{n}$, where $n$ is the number of vertices.

There is a refinement of the belief that road-networks are plane graphs, 
which deals with the hierarchal nature of road networks. In particular,
the edges in most road-network databases are categorized
according to a discrete hierarchy, with European road
networks usually having 13 types of road segments and U.S.~networks
usually having 4, which correspond roughly to 
U.S.~highways, state highways, major roads, and local roads.
A refined belief could be that road networks are planar at each level of
hierarchy, given that these hierarchies are intended to capture the
various scales of ``resolution'' in road networks.
Unfortunately, this belief is also unsupported by the data, as
a refined analysis of the crossing counting data shows that most
of the crossings are actually between edges in the same level of the
hierarchy.
Indeed, Figure~\ref{fig-cross}
shows that the vast majority of crossings
are between pairs of edges in the highest level of the
hierarchy (level 4), which is intended for local roads.

\begin{figure*}[htb]
\vspace*{-0.6in}
%\hspace*{-0.3in}
\includegraphics[width=6in]{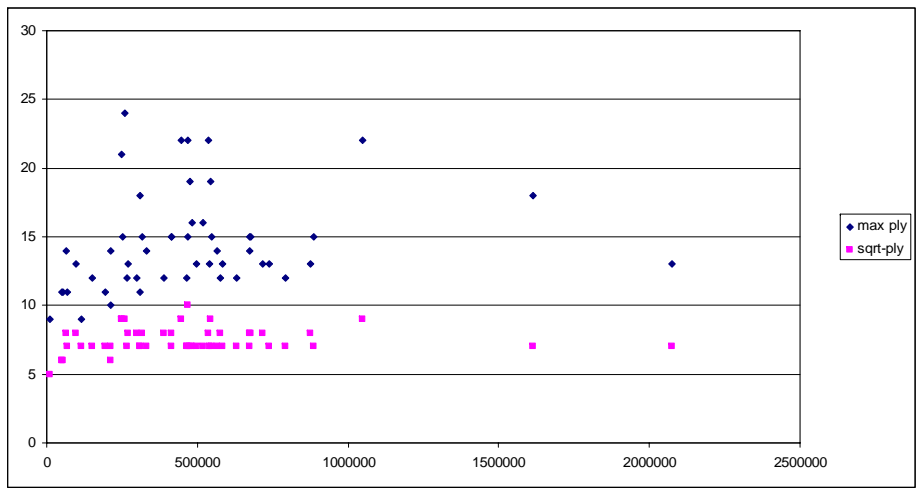} 
\vspace*{-1.5in}
\caption{
A scatter plot of the maximum ply of vertices in each road network and a
scatter plot of the $\lfloor \sqrt{n}\rfloor$-th largest vertex ply, both
indexed by network size.}
\label{fig-ply}
\vspace*{-4pt}
\end{figure*}

The take-away message from the above analysis is that if we view
road networks
as being geometric graphs drawn on a sphere, then they have many edge crossings.
In particular, if we have an $n$-vertex road network, then there will 
typically be
$\Theta(\sqrt{n})$ edge crossings when all those segments are
projected on the surface of a sphere using their GPS coordinates.
Of course, road segments are not projected on the surface of a sphere and
they don't actually intersect; hence,
roads that have apparent crossings are the result of 
overpasses, tunnels, data errors, or representational imprecisions. 

A natural question, of course, is to ask why the number of edge
crossings in road networks is proportional to $\sqrt{n}$.
One possible explanation comes by way of analogy.
Imagine an idealized city, which we will call Gotham City,
that has a road network that, at its finest grain of detail,
consists of an orthogonal grid of
roads, like downtown Manhattan. 
Suppose at the next coarser grain of detail, Gotham City has an
expressway network built on top of its fine-grain road grid, with the
number of expressways being constant.
Since expressways are built for fast transport, they are unlikely to
have large zig-zags or consist of large spiral segments. Instead,
they would be built to cut straight through
Gotham City. Since any straight segment cutting through an orthogonal
grid will intersect $O(\sqrt{n})$, this implies that the number of
edge crossings in Gotham City's road network is $O(\sqrt{n})$. 
Likewise, a random Delaunay triangulation  (which is a natural way of
triangulating a set of points), defined on $n$ points,
will have a similar property---in that any line
will intersect an expected $O(\sqrt{n})$ of segments of such a
triangulation~\cite{bd-snrdt-07}.
A similar phenomenon, occurring at a grander scale,
could be the motivating reason why real-world road networks tend to
have $O(\sqrt{n})$ edge crossings, with
the existence of
noisy data (which tends to increase the number of edge
crossings) offsetting the fact that expressways and large highways are
built in a way that tries to avoid needless overpasses (which tends
to reduce the number of edge crossings).
In any case, 
% we can safely assume that the number of segment crossings
% in a road network is at most $O(n)$, where $n$ is the number of vertices,
% and 
we can use the non-planarity of road networks
to motivate an alternative characterization of road networks, as seen
through an algorithmic lens.

\section{Disk Neighborhood Systems}
We define a neighborhood system $S$ of $n$ disks in the plane to be 
an \emph{$f(n)$-exceptional} $k$-ply system if there is a function,
$f(n)$, such that there exists a subset 
$T$ of $f(n)$ disks in $S$ so that $S-T$ 
forms a $k$-ply neighborhood system and every disk in $T$
intersects at most $f(n)$ disks in $S$.
Intuitively, the disks in $T$ are the ``exceptions that prove the
rule'' that $S$ is, for the most part, a $k$-ply neighborhood system.
Given a graph $G$ with $n$ vertices that is embedded in $\R^2$,
we say that $G$ is a \emph{multiscale-dispersed} graph if its
natural disk neighborhood system is
an $O(\sqrt{n})$-exceptional $O(1)$-ply system\footnote{
  Note that the disks in the exception set $T$ are not necessarily
  identified at the outset---we are simply saying that $T$
  exists in order for $G$ to be a multiscale-dispersed graph.}.
As mentioned above,
the motivation for the constant-ply part of 
this definition is that the edges and vertices of
road networks tend to be well-spaced, in a bounded-depth fractal
way~\cite{bl-fc-94}. By including disks of widely varying sizes, our definition allows the simultaneous treatment of road network features of widely varying scales, but (with the removal of the exceptional disks) the remaining disks must be well dispersed in order to achieve the $k$-ply property.

The choice of an $O(\sqrt{n})$ asymptotic growth rate for the exceptional part of the definition 
comes from the possible existence of overlapping large disks.
Such large disks are troublesome with respect to our desire for
having all disks define a constant-ply
disk neighborhood system if 
these large disks all overlap a dense urban region.
Of course, large disks in a natural disk neighborhood system
are defined by the long roads.
Moreover, it is not unusual for long roads to lead to dense urban
regions, as in
the classic ``all roads lead to Rome'' phenomenon.
Nevertheless,
it is natural for urban areas to have shapes with bounded
aspect ratios
(e.g., see~\cite{bl-fc-94}),
since inhabitants of such regions 
naturally like to be close to each other.
Thus, the map footprints of urban areas tend to have 
perimeters that are proportional to the square root of their areas.
Thus, if a dense city, like Gotham City, has $O(n)$ internal
edges, it will have $O(\sqrt{n})$ boundary edges. But this
necessarily bounds the number of long incoming roads to be
$O(\sqrt{n})$.

Our definitions are not simply motivated by geographical arguments,
however.
In Figure~\ref{fig-ply} we show a scatter plot from experiments that 
support the claim that the maximum
ply of natural disk neighborhood systems for real-world $n$-vertex road
networks is higher than the
$\lfloor \sqrt{n}\rfloor$-th largest vertex ply, which itself is a small
constant.
In particular, even for small road networks, it shows
that the largest circle-center ply 
can be more than $20$, whereas all the 
$\lfloor \sqrt{n}\rfloor$-th largest vertex plies are at most $10$.
Using an argument similar to that used in the proof
of the following lemma (\ref{lemma-pairs}), 
we can show that the maximum ply of a disk
neighborhood system is within a constant factor of the maximum ply of the
centers of the disks, that is, the vertices of our road networks.
Thus, this empirical analysis supports the claim that 
the natural disk neighborhood systems of
real-world road networks are $O(\sqrt{n})$-exceptional $O(1)$-ply systems.
We explore one implication of the multiscale-dispersed graph
property in the following lemma.

\begin{lemma}
\label{lemma-pairs}
If $G$ is an $n$-vertex multiscale-dispersed graph,
then the number of pairs of intersecting disks in $G$'s natural disk
neighborhood system is $O(n)$.
\end{lemma}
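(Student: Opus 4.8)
The plan is to split the count of intersecting disk pairs into two groups, according to whether at least one disk of the pair lies in the exceptional set $T$ or whether both lie in $S - T$, where $S$ is $G$'s natural disk neighborhood system and $T$ is the $O(\sqrt n)$-sized exception set witnessing that $G$ is multiscale-dispersed. For the first group, each of the $|T| = O(\sqrt n)$ exceptional disks intersects at most $f(n) = O(\sqrt n)$ disks of $S$ by definition, so the number of intersecting pairs meeting $T$ is $O(\sqrt n)\cdot O(\sqrt n) = O(n)$. The real content is therefore the second group: bounding the number of intersecting pairs among the disks of the $k$-ply system $S - T$, with $k = O(1)$.

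For that, the key step is a standard charging argument for $k$-ply disk systems. Given two intersecting disks $D_i, D_j \in S-T$ with radii $r_i \ge r_j$, their centers are at distance at most $r_i + r_j \le 2r_i$, so $D_j$'s center lies in the disk of radius $2r_i$ about $D_i$'s center; moreover $D_j$ contains a point within distance $r_j \le r_i$ of its own center, and a short calculation shows $D_j$ covers a constant fraction of area inside the disk $D_i^{+}$ of radius, say, $3r_i$ centered at $D_i$'s center — or, more cleanly, one charges the pair to the smaller disk $D_j$ and argues that only $O(k)$ disks $D_i$ of radius at least $r_j$ can intersect a fixed $D_j$. The latter follows because the centers of all such larger disks lie within distance $2 r_i$... here one wants the cleaner symmetric statement: order each intersecting pair so $r_j \le r_i$, and observe that the disk $\tfrac12 D_j$ of radius $r_j/2$ about $D_j$'s center is contained in $D_i$ scaled appropriately; since $S-T$ is $k$-ply, no point is covered by more than $k$ disks, and a packing/area argument over the annular region around $D_j$ shows at most $O(k)$ larger disks can intersect $D_j$. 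Summing the $O(k)$ bound over all $n$ disks $D_j$ gives $O(kn) = O(n)$ intersecting pairs in $S-T$.

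The step I expect to be the main obstacle — really the only subtle point — is making the geometric packing argument fully rigorous: one must pick the right scaled/shifted disk associated to each pair so that (a) it lies inside a bounded-radius region around the smaller disk's center, (b) its area is a constant fraction of that region, and (c) these witness regions for distinct larger disks intersecting a fixed $D_j$ have bounded overlap, which is exactly where the $k$-ply hypothesis is consumed. Once the constant in "$O(k)$ larger disks per smaller disk" is pinned down, adding back the $O(n)$ pairs involving $T$ completes the bound. I would present the geometric lemma as a self-contained claim (essentially the standard fact that the intersection graph of a $k$-ply disk system has $O(kn)$ edges), then combine it with the two-line exceptional-set count above.
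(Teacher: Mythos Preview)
Your proposal is correct and follows the same overall architecture as the paper: split the intersecting pairs according to whether they meet the exceptional set $T$, dispose of the $T$-pairs via the $O(\sqrt n)\cdot O(\sqrt n)=O(n)$ count, and reduce the remaining work to the known fact that a $k$-ply disk system has $O(kn)$ intersecting pairs.

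The only difference is in how that last fact is established. You gesture toward an area/packing argument---charge each pair to its smaller disk, then show via an area-integration bound over a dilate of that disk that only $O(k)$ larger disks can hit it. The paper explicitly notes that one can simply cite this as Lemma~3.3.2 of Miller \emph{et al.}, which is exactly your ``standard fact,'' but instead gives a self-contained combinatorial argument: it separates the two cases where one disk contains the other's center (at most $k$ such charges per disk, directly from the ply bound) versus where neither center is contained (at most $6k$ such charges, via an angular argument that at most six pairwise center-disjoint larger disks can touch a given smaller one). The paper's route yields an explicit constant of $7k$ per disk; your area argument is equally valid but, as you anticipated, requires pinning down the constant in ``each larger disk covers a fixed fraction of area in a dilate of the smaller one,'' which is the one step you left loose.
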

\begin{proof}
Since $G$ is a multiscale-dispersed graph,
its natural disk neighborhood system, $S$, is an $O(\sqrt{n})$-exceptional
$O(1)$-ply neighborhood system.
Let $T$ denote the subset of $S$ of at most $O(\sqrt{n})$ disks in 
$S$ such that each disk in $T$ intersects at most $O(\sqrt{n})$ other disks
in $S$.
Clearly, the number of pairs of intersecting disks in $S$ such that one of
them is in $T$ is $O(n)$.
Thus, we have only to count the number of pairs of intersecting disks such
that both are in $S-T$, which we could do using a lemma (3.3.2) of
Miller {\it et al.}~\cite{mttv-sspnn-97},
but we include a counting argument here for completeness and because it leads to
a better constant factor.
Let $k\in O(1)$ denote the ply of the disks in $S-T$,
and let us consider the two cases for a pair of intersecting disks 
$c$ and $d$ in $S-T$:
\begin{enumerate}
\item
One of $c$ and $d$ contains the center of the other.
In this case, we charge this intersection to a disk in $\{c,d\}$ whose center
is contained inside the other disk.
Let us call this a ``containment'' charge.
\item
Neither $c$ nor $d$ contains the center of the other.
In this case, we charge this intersection to the smaller of the disks
$c$ and $d$.
Let us call this a ``tall'' charge.
\end{enumerate}
To complete the proof, we need only account for these two types of 
charges in turn.
First, note that,
since $S-T$ is a $k$-ply system, each disk can receive at most 
$k\in O(1)$ containment charges (for at most $k$ disks can 
contain the center of any disk).
So we have only to
account for the number of tall charges a disk $c$ can receive.
Suppose $d$ is a disk larger than $c$ that intersects $c$ 
and does not contain $c$'s center.
There are at most $k$ other disks that can contain $d$'s center and also
intersect $c$. Thus, the total number of tall charges $c$ can receive is at
most $k$ times the number of disks larger than $c$ 
that do not contain each other's centers 
and which intersect $c$.
If we order the centers of these disks around $c$, then each consecutive pair
of centers defines a triangle whose side oppose $c$ is largest; hence, its
angle at $c$ is largest. Therefore, there can be at most $6$ such disks,
which implies that the number of tall charges $c$ can receive is at most
$6k$, which is $O(1)$.
\end{proof}

\begin{corollary}
\label{cor-arrangement}
The combinatorial complexity of the arrangement, ${\cal A}(S)$, of circles
determined by the natural disk neighborhood system, $S$, of a 
multiscale-dispersed graph is $O(n)$.
\end{corollary}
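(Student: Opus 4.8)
The plan is to reduce the corollary directly to Lemma~\ref{lemma-pairs} via the standard fact that the combinatorial complexity of an arrangement of $n$ circles is governed, up to an additive $O(n)$ term, by the number $X$ of pairs of circles that actually cross. First I would recall why this holds: two distinct circles meet in at most two points, so the number of vertices of $\mathcal{A}(S)$ is at most $2X$. A circle carrying $t\ge 1$ of these vertices is subdivided into exactly $t$ arcs, while a circle carrying none contributes a single closed edge; summing over all $n$ circles gives at most $2(2X)+n$ edges. Finally, applying Euler's formula to the planar subdivision induced by $\mathcal{A}(S)$ bounds the number of faces by $O(V+E)$. Hence the total complexity of $\mathcal{A}(S)$ is $O(X+n)$.

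It then remains to show $X=O(n)$. Here I would observe that if two circles of $S$ cross (i.e., contribute at least one arrangement vertex), then in particular the two corresponding disks of $S$ share a common point, so that pair is counted among the intersecting disk pairs of the natural disk neighborhood system. (The converse need not hold, since one disk may lie strictly inside another with disjoint boundary circles, but this inclusion only works in our favor.) Therefore $X$ is at most the number of intersecting disk pairs, which is $O(n)$ by Lemma~\ref{lemma-pairs} because $G$ is a multiscale-dispersed graph. Combining the two bounds yields complexity $O(n)$.

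The argument is essentially routine once Lemma~\ref{lemma-pairs} is available, so there is no serious obstacle; the only points requiring a little care are bookkeeping ones --- namely, correctly distinguishing ``the two disks intersect'' from ``the two bounding circles cross,'' and making sure that circles meeting nothing (and other degenerate configurations such as tangencies or coincident circles, which can only decrease the vertex count or be removed by an infinitesimal perturbation) are properly handled in the Euler-formula step rather than silently dropped.
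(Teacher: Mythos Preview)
Your argument is correct and follows the same route as the paper: bound the arrangement vertices by twice the number of intersecting disk pairs and invoke Lemma~\ref{lemma-pairs}. The paper's proof is terser---it simply observes that each intersecting pair of disks contributes at most two arrangement vertices and stops there---whereas you additionally spell out the edge and face bounds via Euler's formula and the distinction between disk intersection and circle crossing, but this is elaboration of the same idea rather than a different approach.
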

\begin{proof}
Every intersecting pair of disks in $S$ determines at most two vertices in
${\cal A}(S)$.
Thus, by Lemma~\ref{lemma-pairs}, the combinatorial complexity 
of ${\cal A}(S)$ is $O(n)$.
\end{proof}

We have shown empirically above that an $n$-vertex 
road network is a geometric graph having $\Theta(\sqrt{n})$ edge crossings.
Interestingly, our characterization of road networks as
bounded-degree multiscale-dispersed graphs can be used to 
imply an alternative upper bound on the number of
crossings among the edges of a road network.
This linear upper bound is not as strong as the above characterization,
but it is nevertheless much better than the quadratic bound that would be 
expected from a random geometric graph.

\begin{lemma}
\label{lemma-crossings}
Let $G$ be an $n$-vertex multiscale-dispersed graph
with bounded degree. Then the number of 
pairs of edges of $G$ that cross in $\R^2$ is $O(n)$.
\end{lemma}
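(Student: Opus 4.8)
The plan is to reduce the problem of counting edge crossings to the problem of counting intersecting disk pairs, which is already bounded by $O(n)$ via Lemma~\ref{lemma-pairs}. The key observation is this: if two edges $e=(u,v)$ and $e'=(u',v')$ of $G$ cross in $\R^2$, then I want to charge that crossing to a pair of intersecting disks in the natural disk neighborhood system $S$, in such a way that each disk pair receives only $O(1)$ charges. Since $G$ has bounded degree, say maximum degree $\Delta$, the disk $D_w$ centered at a vertex $w$ has radius equal to half the longest edge incident to $w$; in particular, for any edge $(w,x)$ incident to $w$, at least one of $D_w$ or $D_x$ has radius at least half of $|wx|$, and more usefully, the two half-disks $D_u$ and $D_v$ of an edge $e=(u,v)$ together ``cover'' a neighborhood of that edge in the sense that $D_u$ and $D_v$ intersect (their radii sum to at least $|uv|$, by the definition of the natural system).

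First I would set up the charging scheme. Suppose $e=(u,v)$ and $e'=(u',v')$ cross at a point $p$. The point $p$ lies on segment $uv$, so $p$ is within distance $|uv|/2$ of one of the endpoints, say $u$; hence $p \in D_u$ (since $D_u$ has radius at least $|uv|/2$). Similarly $p$ is within distance $|u'v'|/2$ of one of $u',v'$, say $u'$, so $p \in D_{u'}$. Therefore $D_u$ and $D_{u'}$ both contain $p$, so they intersect, and $(D_u, D_{u'})$ is an intersecting pair in $S$. I charge the crossing of $e$ and $e'$ to this pair. Next I would bound how many crossings a fixed intersecting pair $(D_a, D_b)$ can receive: a crossing charged to $(D_a,D_b)$ must come from an edge incident to $a$ and an edge incident to $b$ (where the crossing point lies in $D_a$ near $a$ and in $D_b$ near $b$). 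Since $G$ has degree at most $\Delta$, there are at most $\Delta$ edges incident to $a$ and at most $\Delta$ incident to $b$, so at most $\Delta^2 = O(1)$ crossings are charged to any single pair. Combining with Lemma~\ref{lemma-pairs}, the total number of crossings is $O(\Delta^2 \cdot n) = O(n)$.

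The main obstacle I anticipate is making the ``near $a$'' and ``near $b$'' bookkeeping fully rigorous so that the bound on charges per pair really is a function of $\Delta$ alone. The subtlety is that when I say the crossing point $p$ lies in $D_u$ ``because it is within $|uv|/2$ of $u$,'' I have selected $u$ as the closer of the two endpoints of $e$ — but for a different crossing on the same edge $e$, the relevant closer endpoint might be $v$ instead. This is harmless: I simply charge to whichever of $D_u, D_v$ contains $p$ (at least one does, and I can break ties arbitrarily), and symmetrically for $e'$. Then a crossing charged to the ordered pair $(D_a, D_b)$ is witnessed by a choice of an edge $e \ni a$ and an edge $e' \ni b$; there are at most $\Delta$ choices of each, so at most $\Delta^2$ crossings per ordered pair, and the argument goes through. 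One should also note the degenerate case where $e$ and $e'$ share an endpoint or overlap — but crossing edges in a geometric graph are understood to cross transversally at an interior point, and shared-endpoint pairs of a bounded-degree graph contribute only $O(\Delta^2 n) = O(n)$ pairs total anyway, so they do not affect the bound.
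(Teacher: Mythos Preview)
Your proof is correct and follows essentially the same approach as the paper: both charge each crossing at a point $p$ to the pair of disks centered at the nearer endpoint of each edge (which both contain $p$ and hence intersect), then use bounded degree to bound charges per pair by $O(\Delta^2)$ and invoke Lemma~\ref{lemma-pairs}. The paper phrases the containment step using curve distances $d_{e}(p,\cdot)$ rather than straight-line half-lengths, but since edges here are straight segments the arguments are identical.
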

\begin{proof}
Let $S$ be the natural disk neighborhood system for $G$.
Suppose two edges $e_1=(u,v)$ and $e_2=(w,z)$ in $G$ cross in $\R^2$ 
at some point $p$ (note: we do not assume these edges are straight lines).
Let $d_{e_x}(p,\nu)$ denote the distance from $p$ to the endvertex
$\nu$ along the edge (curve) $e_x$.
Then, without loss of generality, 
$d_{e_1}(p,u)\le d_{e_1}(p,v)$ and
$d_{e_2}(p,w)\le d_{e_2}(p,z)$.
Note that, by the definition of a natural disk neighborhood system,
the radius of $u$'s disk is at least $d_{e_1}(p,u)$
and the radius of $w$'s disk is at least $d_{e_2}(p,w)$;
hence, $u$'s disk intersects $w$'s disk.
Let us therefore charge this pair of vertices for this edge crossing.
Note that, since $G$ has bounded degree, this pair of vertices
can be charged at most $O(1)$ times in this way (specifically, 
each such pair can be charged at most $O({\rm degree}(G)^2)$ times).
Furthermore,
by Lemma~\ref{lemma-pairs}, there are only $O(n)$ pairs of intersecting disks
in $S$.
Therefore, the number of pairs of edges of $G$ that cross in $\R^2$ is $O(n)$.
\end{proof}

Thus, our characterization allows for an even larger number of
intersections than occur in practice. More importantly, our
characterization allows for us to construct efficient algorithms for
important problems involving road networks.

\section{Separator Decomposition and Its Applications}
In some applications, we desire 
algorithms for small-sized \emph{separators}.
Given a graph $G$, an
{\it $f(n)$-separator} for $G$ is a set of $f(n)$ vertices whose removal
from $G$ results in $G$ being subdivided into two subgraphs having
at most $\delta n$ vertices each~\cite{g-psppt-95,lt-stpg-79,m-fsscs-86},
for some fixed constant $0<\delta<1$.

We may alternatively desire
algorithms for small-sized \emph{geometric separators}, which are
defined for disk systems.
Given a collection of disks $S$ in $\R^2$,
a \emph{Jordan curve} $f(n)$-separator 
for $S$ is a Jordan curve $J$
that
intersects at most $f(n)$ disks in $S$ so that at most $\delta n$
disks are
either inside the interior or exterior of $J$, for some fixed
constant
$0<\delta<1$.
Ideally, $J$ should be a simple curve, like a circle, and indeed,
Eppstein {\it et al.}~\cite{emt-dltag-93} 
show that $k$-ply disk neighborhood
systems have linear-time computable
circle $O(\sqrt{n})$-separators for $\delta=3/4$, where $k$ is a
constant.
We immediately have the following:

\begin{theorem}
\label{thm-deterministic}
Suppose we are given a set $S$ that is
an $O(\sqrt{n})$-exceptional $O(1)$-ply disk neighborhood system.
Then one can deterministically construct a geometric
$O(\sqrt{n})$-separator decomposition of
$S$, for $\delta=3/4$, 
in $O(n)$ time.
\end{theorem}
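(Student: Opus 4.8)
The plan is to reduce the construction to the known separator algorithm of Eppstein et al.~\cite{emt-dltag-93} for ordinary $k$-ply neighborhood systems. By hypothesis there is a subset $T\subseteq S$ with $|T|=O(\sqrt{n})$ whose removal leaves a $k$-ply neighborhood system $S-T$ for some constant $k$. First I would set $T$ aside and apply the algorithm of~\cite{emt-dltag-93} to $S-T$: since $S-T$ is $k$-ply with $k=O(1)$, this produces in $O(n)$ time a separating circle $J$ that crosses $O(\sqrt{|S-T|})=O(\sqrt{n})$ of the disks of $S-T$ and leaves at most $\tfrac{3}{4}|S-T|$ of them strictly inside $J$ and at most $\tfrac{3}{4}|S-T|$ strictly outside.

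Next I would adjoin all of $T$ to this separator. The resulting set of disks crossing or forced into the separator has size $O(\sqrt{n})+|T|=O(\sqrt{n})$, and because every disk of $T$ now counts toward the separator rather than toward either region, at most $\tfrac{3}{4}|S-T|\le\tfrac{3}{4}n$ disks of $S$ fall strictly inside $J$ and at most $\tfrac{3}{4}n$ fall strictly outside; this is exactly a geometric $O(\sqrt{n})$-separator for $S$ with $\delta=3/4$. The key point for the recursion is that each of the two resulting subsystems is a subset of $S-T$, hence is itself a $k$-ply neighborhood system, so no exception set survives below the root. I would therefore recurse on the two sides using~\cite{emt-dltag-93} directly, halting when a subproblem has constant size, to obtain the full separator decomposition.

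For the running time I would use the standard analysis of recursive separator decompositions: the recursion tree has depth $O(\log n)$, the subproblem sizes at each level sum to $O(n)$, and~\cite{emt-dltag-93} spends linear time per subproblem, which immediately yields $O(n\log n)$. To push this down to $O(n)$ I would invoke the refined decomposition scheme of Goodrich~\cite{g-psppt-95}: although stated for planar graphs, it applies to any graph family that is closed under taking subgraphs and admits a linear-time $O(\sqrt{\cdot})$-separator algorithm, and the family of $k$-ply disk neighborhood systems---being closed under deleting disks---satisfies both requirements.

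The hard part will be the bookkeeping that makes the recursion genuinely self-contained and linear-time: one must verify that the exception set is peeled off exactly once, at the root, so that all deeper subproblems are bona fide $O(1)$-ply systems to which~\cite{emt-dltag-93} applies unchanged, and one must check that Goodrich's amortization---which for planar graphs clusters bounded-size pieces so that the total construction cost telescopes---transfers correctly to disk neighborhood systems. A secondary subtlety is that the definition only guarantees the existence of $T$, not an explicit description of it; this is harmless, since the circle separator of~\cite{emt-dltag-93} is derived from a centerpoint of the disk centers together with a great circle crossing few disks in expectation, and adjoining $O(\sqrt{n})$ arbitrary disks to a $k$-ply system raises that expectation by only $O(\sqrt{n})$, so the algorithm may in fact be run on $S$ as given, with $T$ entering only the analysis.
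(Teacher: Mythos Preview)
Your core idea---absorb the exceptional set $T$ into the separator---is exactly what the paper does, and your final paragraph converges to the paper's actual mechanism: run the Eppstein--Miller--Teng algorithm directly on $S$ (not on $S-T$), with the target separator size doubled, so that the dual-space search for a cutting circle still succeeds without ever having to identify $T$ explicitly. The paper's proof is essentially just that one observation.

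Where you diverge is in the recursion and the $O(n)$ accounting. The paper's proof stops after producing one separator; it does not carry out a full recursive decomposition here (note that the \emph{recursive} decomposition is handled separately, and randomly, in the next theorem). Your appeal to Goodrich~\cite{g-psppt-95} to compress the naive $O(n\log n)$ recursion to $O(n)$ is not justified as stated: Goodrich's linear-time scheme is not a black box that applies to ``any subgraph-closed family with a linear-time $O(\sqrt{\cdot})$ separator.'' It relies on planar-specific machinery---BFS layering, contracting connected pieces while remaining planar, and reapplying the planar separator theorem to the contracted graph---none of which transfers automatically to $k$-ply disk systems. So if you truly need the full recursive decomposition deterministically in $O(n)$, that step is a gap; if you only need a single geometric separator (which is all the paper actually argues here), the Goodrich detour is unnecessary and the rest of your argument already suffices.
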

\begin{proof}
The algorithm of Eppstein {\it et al.}~\cite{emt-dltag-93} 
constructs a separator decomposition assuming the disk system
has constant ply. Their algorithm can be extended to
an $O(\sqrt{n})$-exceptional $O(1)$-ply disk neighborhood system,
however. Note that removing all the exceptional disks from $S$ results
in a constant-ply disk system, $S'$, and adding all the exceptional disks to 
an $O(\sqrt{n})$-sized separator for $S'$ will result in 
an $O(\sqrt{n})$-sized separator for $S$, whose size is at most
double that for $S'$.
Moreover, the algorithm of
Eppstein {\it et al.}~\cite{emt-dltag-93} will still work in linear
time for $S$, as it is based on a search in dual space for a point
corresponding to the cutting circle with respect to a desired
separator size. This point can still be
found in linear time for $S$ using their algorithm, just by
increasing the desired size of the separator by a factor of $2$.
\end{proof}

We also have the following randomized result for constructing an
entire recursive separator decomposition for a road network, $G$.

\begin{theorem}
\label{thm-randomized}
Suppose we are given an $n$-vertex geometric graph $G$
having at most $O(n/\log^{(c)} n)$ edge crossings,
for some constant $c$.
Then we can construct a recursive $O(\sqrt{n})$-separator
decomposition of
$G$ in $O(n)$ expected time, for $\delta=2/3$.
\end{theorem}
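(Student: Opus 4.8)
The plan is to reduce the problem to known linear-time machinery for planar graphs. Since the edges of $G$ are straight segments (as assumed throughout the paper), I would first run the randomized edge-intersection algorithm of Eppstein {\it et al.}~\cite{egs-ltagg-09} on $G$. By hypothesis the number $k$ of pairwise edge crossings is $O(n/\log^{(c)} n)$, so its $O(n + k\log^{(c)} n)$ expected running time is $O(n)$; this is the only randomized step. As a byproduct it supplies, for each edge, the sorted sequence of points at which that edge is crossed.

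Next I would \emph{planarize} $G$ into a plane graph $G'$ by replacing each crossing point with a new degree-$4$ vertex that subdivides the two edges meeting there. Each crossing introduces one vertex and at most two new edges, so $G'$ has $n' = n + k = O(n)$ vertices, and, being plane, it has $O(n')$ edges (in particular $G$ itself has $O(n)$ edges). I would assign weight $1$ to each original vertex of $G$ and weight $0$ to each crossing vertex, so that the total vertex weight of any subgraph of $G'$ equals its number of original vertices.

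I would then apply Goodrich's linear-time construction of a recursive $O(\sqrt{\cdot})$-separator decomposition for planar graphs~\cite{g-psppt-95} to $G'$, using the weighted form of the Lipton--Tarjan separator~\cite{lt-stpg-79} so that every separator splits the remaining original-vertex weight in ratio at most $2/3$. This runs in $O(n') = O(n)$ time and yields a decomposition tree whose separators have size $O(\sqrt{n'}) = O(\sqrt{n})$ and each of whose nodes carries at most $\tfrac{2}{3}n$ vertices of $G$. Finally I would translate this back to $G$: in every separator, each crossing vertex $x$ --- the crossing of edges $(u,v)$ and $(w,z)$ --- is replaced by the two endpoints $u$ and $w$; deleting $u$ and $w$ from $G$ destroys both edges through $x$, so the corresponding pieces of $G$ remain separated. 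This at most doubles each separator (still $O(\sqrt{n})$), and each original edge of $G$ is placed in whichever piece still contains its subdivision path. All of this post-processing is linear, giving a recursive $O(\sqrt{n})$-separator decomposition of $G$ with $\delta = 2/3$ in $O(n)$ expected time.

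The one delicate point I anticipate is making the balance hold with respect to $n$ rather than $n' = n+k$: an unweighted separator of $G'$ would only guarantee $\tfrac{2}{3}n'$-balance and, worse, would not control how the weightless crossing vertices pile up in deep subproblems of the recursion, so one must verify that Goodrich's amortized linear-time scheme still goes through when its separators are weight-balanced. Everything else is the routine composition of~\cite{egs-ltagg-09} with~\cite{g-psppt-95} already foreshadowed in the related-work discussion.
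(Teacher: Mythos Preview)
Your proposal is correct and matches the paper's approach: the paper's proof is simply a citation of Theorems~4.1, 5.1, and 6.1 of Eppstein \emph{et al.}~\cite{egs-ltagg-09}, which (as the related-work section already summarizes) find the crossings in $O(n+k\log^{(c)}n)$ expected time and combine the resulting planarization with Goodrich's linear-time recursive separator decomposition~\cite{g-psppt-95}. You have essentially reconstructed what those cited theorems do; the weight-balance subtlety you flag is exactly the kind of detail that~\cite{egs-ltagg-09} handles and that the paper here delegates to the citation.
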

\begin{proof}
The proof follows from Theorems 4.1, 5.1, and 6.1 of
Eppstein {\it et al.}~\cite{egs-ltagg-09}, specialized to 
an $n$-vertex geometric graph $G$
having at most $O(n/\log^{(c)} n)$ edge crossings.
\end{proof}

\subsection{Applications}
In this section, we explore some applications of our linear-time
separator decomposition algorithms for road networks.
Given an $n$-vertex
bounded-degree graph $G$ and a recursive $O(\sqrt{n})$-separator 
decomposition for $G$,
Henzinger {\it et al.}~\cite{hkrs-fspap-97} show that one can compute shortest
paths from a single source $s$ in $G$ to all other vertices in $G$ in $O(n)$
time.
Using the separator decomposition algorithm presented above
in Theorem~\ref{thm-randomized},
we can show that
their algorithm applies to road networks.

Suppose, then, we are given $k$ distinguished vertices in an $n$-vertex
road network, $G$, that is, a multiscale-dispersed graph, and we
wish to construct 
the \emph{Voronoi diagram} of $G$, which is a labeling of each
vertex $v$ of $G$ with the name of the distinguished vertex closest
to $v$.
As before, we can assume without loss of generality 
that $G$ has constant degree.
In this case, we construct a recursive $O(\sqrt{n})$-separator
decomposition of $G$ using one of the algorithms of the previous
subsection.
Let $B$ be the recursion tree and let us label each vertex $v$ in $G$
with the internal node $w$ in $B$ where $v$'s disk appears as a separator
or with the leaf $w$ in $B$ corresponding to a set containing $v$ where
we stopped the recursion (because the set's size was below our
stopping threshold).
Given this labeling, we can trace out the subtree $B'$ of $B$ 
that consists of the union of paths from the root of $B$ to the
distinguished nodes in $G$ in $O(n)$ time.
Let us now assign each edge in $B'$ to have weight $0$ and let us add
$B'$ to $G$ to create a larger graph $G'$.
Note that if we add each internal node $v$ in $B'$ to the separator
associated with node $v$ in $B$, then we get a recursive 
$O(\sqrt{n})$-separator decomposition for $G'$, for each separator in
the original decomposition increases by at most one vertex.
Thus, we can apply the algorithm of Henzinger {\it et al.}~\cite{hkrs-fspap-97}
to compute the shortest paths in $G'$ from the root of $B'$ to every
other vertex in $G'$ in $O(n)$ time.
Moreover, since the edges of $G'$ corresponding to edges of $B'$ have
weight $0$, this shortest path computation will give us the Voronoi
diagram for $G$.
Therefore, we have the following.

\begin{theorem}
\label{thm-vd}
Given a road network, modeled as 
a connected $n$-vertex multiscale-dispersed graph, $G$, 
having $O(n/\log^{(c)} n)$ edge crossings,
for some constant $c$,
we can compute shortest paths from any vertex
$s$ or the Voronoi diagram defined by any set of $k$ vertices in $G$
in $O(n)$ expected time.
\end{theorem}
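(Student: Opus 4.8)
The plan is to reduce both claims to a single invocation of the linear-time shortest-path algorithm of Henzinger {\it et al.}~\cite{hkrs-fspap-97} on a graph equipped with a recursive $O(\sqrt{n})$-separator decomposition, with Theorem~\ref{thm-randomized} supplying that decomposition. First I would observe that a road network has bounded degree, so we may assume $G$ has constant degree: splitting any vertex into a small bounded-degree gadget changes $n$ and the number of crossings by only constant factors and preserves the multiscale-dispersed property up to constants. For the shortest-path claim this already finishes the argument: since $G$ has $O(n/\log^{(c)} n)$ crossings, Theorem~\ref{thm-randomized} builds a recursive $O(\sqrt{n})$-separator decomposition of $G$ in $O(n)$ expected time, and feeding this decomposition and any source $s$ to the algorithm of Henzinger {\it et al.}~\cite{hkrs-fspap-97} computes all shortest-path distances from $s$ in $O(n)$ additional deterministic time.

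For the Voronoi diagram of a set of $k$ distinguished vertices I would carry out the reduction already sketched in the preceding discussion. Construct the recursive separator decomposition of $G$ and its recursion tree $B$; label every vertex $v$ of $G$ with the node $w$ of $B$ at which $v$ appears in the separator, or with the leaf of $B$ whose residual set contains $v$. Trace the subtree $B'$ of $B$ formed by the union of the root-to-$w$ paths over all nodes $w$ carrying a distinguished vertex; this takes $O(n)$ time. Form $G'$ from $G$ by adding the nodes of $B'$ as new vertices, the tree edges of $B'$ as zero-weight edges, and a zero-weight edge from each node of $B'$ to every distinguished vertex it carries, subdividing a node of $B'$ into a short zero-weight path when it carries more than a constant number of distinguished or tree neighbors so that $G'$ stays bounded-degree. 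Running the algorithm of Henzinger {\it et al.}~\cite{hkrs-fspap-97} from the root of $B'$ then yields, for each vertex $v$ of $G$, its distance to the nearest distinguished vertex --- because every distinguished vertex is reachable from the root of $B'$ along a zero-weight path and no cheaper access exists --- and the last distinguished vertex on $v$'s shortest path is exactly its Voronoi label.

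The step that needs the most care, and the one I expect to be the main obstacle, is showing that $G'$ still admits a recursive $O(\sqrt{n})$-separator decomposition of the required quality. Here I would take the decomposition of $G$ and, at each recursion node, add to its separator the node(s) of $B'$ associated with that node; since $B'$ is grafted onto the separator vertices, removing an augmented separator leaves the pieces of $G'$ in bijection with the pieces of $G$ together with disjoint pendant fragments of $B'$, so the balance constant (say $\delta=2/3$) is preserved and each separator grows only by a constant factor, keeping its size $O(\sqrt{n})$ --- essentially the same doubling trick used in the proof of Theorem~\ref{thm-deterministic}. The remaining points are routine: $G'$ has bounded degree after the subdivisions, $B'$ has $O(n)$ nodes and is built in $O(n)$ time, and the only randomness in the whole pipeline comes from Theorem~\ref{thm-randomized}, so the total running time is $O(n)$ in expectation for both the shortest-path and the Voronoi computations.
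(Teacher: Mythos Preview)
Your proposal is correct and follows essentially the same approach as the paper: invoke Theorem~\ref{thm-randomized} for the separator decomposition, apply Henzinger {\it et al.}\ directly for single-source shortest paths, and for the Voronoi diagram graft the zero-weight subtree $B'$ onto $G$ and augment each separator with the corresponding node of $B'$. Your write-up is in fact a bit more careful than the paper's about preserving bounded degree in $G'$ and about how $B'$ is attached to the distinguished vertices, but the underlying construction and the ``add one node per separator'' argument are the same.
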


There are also a number of other applications of our algorithmic
approach for studying road networks.

\section{Circle Arrangements}
Suppose we are given an $n$-vertex
multiscale-dispersed graph $G=(V,E)$, embedded in the plane, and 
we wish to construct an explicit
embedding of the intersection graph, $H$,
for $G$'s natural disk neighborhood system, $S$.
We assume that 
$G$ is connected and has bounded degree, since path problems only
make sense in connected road networks and
real-world road junctions 
cannot have an arbitrary number of incoming roads.
Furthermore, we assume $G$ is specified in terms of its embedding $G'$ on the
sphere, with the edge crossings explicitly represented.
That is, $G'$ is a plane graph whose vertices are the vertices and edge
crossings of $G$, which is commonly referred to as the \emph{planarization}
of $G$. The existence of such a specification of $G$ is dervided from
the result
of Eppstein {\it et al.}~\cite{egs-ltagg-09}, 
which shows that such a representation can be constructed in $O(n)$
time as long as there are at most $O(n/\log^{(c)} n)$ edge crossings,
for some constant $c$.

\subsection{The $k$-Neighborly Property}
Define a \emph{grid shortcut} for a vertex $v$ in $G'$ to be 
a (new) edge from $v$ to one of the endpoints of the first 
edge (non-incident to $v$) in $G$ that is
hit by a vertical or horizontal ray emanating from $v$.
Likewise, define a \emph{grid augmentation} of $G$ to be a graph
${\hat G}$ such that ${\hat G}$ includes each vertex and edge from $G$
plus every (directed) grid shortcut edge emanating from a vertex in $G$.
That is,
${\hat G}$ is a mixed graph~\cite{gt-adfai-02}, containing both directed and
undirected edges, so that, for example, the out-edges of each
vertex $v$ includes both the original undirected edges incident to $v$ 
in $G$ and the new directed edges with origin $v$ in ${\hat G}$.
Note that if $G$ has bounded degree,
then the out-degree of each vertex in the grid augmentation ${\hat G}$ 
is also bounded (in particular, the out-degree of 
each vertex is increased by at most $4$).

Given a multiscale-dispersed graph, 
$G$, together with its natural disk neighborhood
system $S$ and grid augmentation ${\hat G}$, we say that
$G$ is \emph{$k$-neighborly} if, for every pair of intersecting disks
in $S$, centered respectively at vertices $v$ and $w$ in $G$, 
there is a $k_1$-edge path $P_1$ from $v$ to $w$ in ${\hat G}$ and a 
$k_2$-edge path $P_2$
from $w$ to $v$ in ${\hat G}$ such that $k_1\le k$ and $k_2\le k$.
Intuitively, this notion of a $k$-neighborly network 
formalizes the property that if two vertices $v$ and $w$ 
are close enough in $G$ so that their associated disks intersect, 
then there should be a set of driving directions 
of length at most $k$
for moving from $v$ to $w$, possibly using vertical or horizontal shortcuts.
Interestingly, our experimental analysis supports the claim that 
real-world road 
networks are $O(1)$-neighborly (see Figure~\ref{fig-dista}), 
even though it is not the case that all pairs
of nearby vertices in real-world
road networks have constant-sized sets of driving
directions that avoid the use of vertical or horizontal shortcuts
(see Figure~\ref{fig-distb}).
Both plots are taken from distance experiments on the same sample of 
road networks in the TIGER/Line database.

\begin{figure}[htb]
\begin{center}
\includegraphics[width=3.5in]{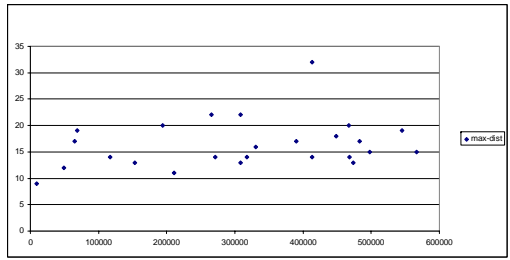} \\
\end{center}
\vspace*{-1in}
\caption{
A scatter plot of the maximum edge distance between centers of two
intersecting disks in a natural disk neighborhood system with the search
expanded to include edges from a grid augmentation of the original road
network.}
\label{fig-dista}
\end{figure}

\begin{figure}[hbt]
\begin{center}
\includegraphics[width=3.5in]{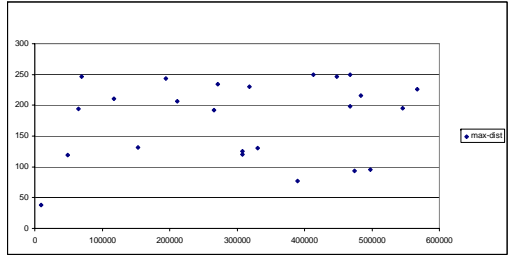} 
\end{center}
\vspace*{-1in}
\caption{
A scatter plot of the maximum edge distance between 
centers of two
intersecting disks in a natural disk neighborhood system (distance
searches where artificially cutoff after a depth of 250, so the
distances could be even higher), without grid augmentations.
}
\label{fig-distb}
\end{figure}

Let us assume, then, that we are given an $O(1)$-neighborly, bounded-degree,
connected multiscale-dispersed graph, $G$, 
and we are interested in constructing an
explicit representation of the intersection graph, $H$, for $G$'s natural
disk neighborhood system.
Furthermore,
let us assume that we are given a list of all pairs of edges in
$G$ that cross in the embedding of $G$ in $\R^2$,
which converts $G$ into a plane graph, $G'$.
Moreover, since $G$ is a multiscale-dispersed graph, we have the following
useful combinatorial lemma.

Suppose we are given the planarization
$G'$ of $G$.
Thus, the plane graph $G'$ has $O(n)$ vertices, which, in turn, implies that
$G'$ has $O(n)$ edges. Moreover, $G'$ is connected,
which implies that each face of $G'$ can be viewed as a simple polygon, if we
consider each edge as having a left side and a right side.
Thus, we can compute a vertical trapezoidal decomposition of each face in $G'$
in $O(n)$ time~\cite{c-tsplt-91a,agr-ratsp-01} and 
we can similarly compute a horizontal trapezoidal decomposition 
of each face in $G'$.
By scanning these two trapezoidal decompositions, we can compute
the grid augmentation, ${\hat G}$, of $G$.
Since $G$ is $k$-neighborly, for some constant $k$, 
and $G$ has bounded degree, given any vertex $v$, we can 
find all the neighbors of $v$'s disk in $S$ by searching the $O(1)$ vertices
in ${\hat G}$ that are at most $k$ edges away from $v$ in ${\hat G}$.
Therefore, we have the following.

\begin{lemma}
\label{lemma-neighborly}
Given an $n$-vertex $O(1)$-neighborly,
bounded-degree, connected multiscale-dispersed graph, $G$, together
with its planarization, $G$, we can construct an explicit
representation of the intersection graph, $H$, of $G$'s natural disk
neighborhood system, $S$, in $O(n)$ time.
\end{lemma}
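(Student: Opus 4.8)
The plan is to follow the outline sketched in the paragraph immediately above the lemma, filling in the two places that actually need care. First I would pin down the size of the planarization: since $G$ has bounded degree it has $O(n)$ edges, and by Lemma~\ref{lemma-crossings} it has $O(n)$ edge crossings, so $G'$ is a connected plane graph with $O(n)$ vertices and hence $O(n)$ edges and $O(n)$ faces. Because $G'$ is only assumed connected (not $2$-connected), I would, exactly as the paper suggests, regard each face as bounded by a closed walk in which every bridge edge is given a distinct left and right side; after an infinitesimal perturbation each such walk is a simple polygon, and the total length of all these boundary walks, counting each edge at most twice, is still $O(n)$. I would also carry along with each edge of $G'$ the identity of the edge of $G$ from which it was cut, so that any edge-piece met by a ray can be traced back to its original $G$-edge and that edge's (original) endpoints.

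Next I would build the two trapezoidal decompositions. Applying a linear-time triangulation/trapezoidal-decomposition algorithm for simple polygons~\cite{c-tsplt-91a,agr-ratsp-01} to each face of $G'$ and summing over faces produces a vertical trapezoidal decomposition of all of $G'$ in $O(n)$ time, and symmetrically a horizontal one. From the vertical decomposition I read off, for each vertex $v$, the edge of $v$'s incident face lying immediately above it along the upward vertical ray (and likewise below, left, and right); the only subtlety is that this edge may itself be incident to $v$ in $G$, in which case I step to the next trapezoid crossed by the ray — still $O(1)$ work per vertex, since the ray leaves the constant-size neighborhood of $v$ after a constant number of trapezoids. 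Recording, for each original vertex $v$, an endpoint of the first non-incident $G$-edge met by each of its four axis-parallel rays yields all grid shortcuts, and hence the grid augmentation $\hat G$. Since $G$ has bounded degree and each vertex gains at most four out-edges, $\hat G$ has bounded out-degree; its underlying undirected graph also has bounded degree, so every bounded-radius neighborhood in $\hat G$ has $O(1)$ size and can be enumerated in $O(1)$ time.

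Finally I would use the $k$-neighborly property to recover $H$. For each vertex $v$ of $G$ I run a search in $\hat G$ that explores all vertices reachable from $v$ by at most $k$ edges; since $\hat G$ has bounded degree and $k = O(1)$, this visits only $O(1)$ vertices and costs $O(1)$ time. For each such vertex $w$ I test whether $w$'s disk actually intersects $v$'s disk, by comparing the distance between their centers against the sum of their radii. By the definition of $k$-neighborliness, every $w$ whose disk meets $v$'s disk is reached within $k$ steps, so no intersecting pair is missed, and the explicit distance test discards the spurious candidates. Outputting the surviving pairs gives all edges of $H$ incident to $v$; summing over all $v$ constructs $H$ in $O(n)$ time, consistent with the fact that $H$ has only $O(n)$ edges by Lemma~\ref{lemma-pairs} (and, via Corollary~\ref{cor-arrangement}, that its embedding has linear complexity).

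The step I expect to be the main obstacle is not the running-time bookkeeping but the geometric faithfulness of the trapezoid-based ray queries: one must argue carefully that the face of $G'$ entered by an axis-parallel ray from $v$, together with its trapezoidal decomposition, really does reveal the first edge of $G$ — not merely of $G'$ — hit by that ray, correctly handling the case where the nearest bounding edge is incident to $v$ and the degenerate faces of the merely-connected $G'$ via the left/right edge-doubling device so that the linear-time simple-polygon decomposition still applies. Once this is set up, the bounded-depth search in $\hat G$ and the disk-intersection test are routine.
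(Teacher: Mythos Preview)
Your proposal is correct and follows essentially the same route as the paper's own argument: bound the size of the planarization $G'$, build vertical and horizontal trapezoidal decompositions face by face (treating each face as a simple polygon via the left/right edge-doubling), read off the grid shortcuts to form $\hat G$, and then do a depth-$k$ out-search from each vertex to recover all intersecting disk pairs. One small caution: your assertion that the \emph{underlying undirected graph} of $\hat G$ has bounded degree is not justified (a single long edge can attract many incoming grid shortcuts), but you do not actually need it---the bounded \emph{out}-degree you already note is what makes each depth-$k$ search $O(1)$, and that is exactly what the paper uses.
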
 

\subsection{Inductive $k$-Clustering}
Suppose we are given a multiscale-dispersed graph, $G$, together with
the disk intersection graph, $H$,
for $G$'s natural disk neighborhood system, $S$.
In particular, let us assume we have
an explicit representation of $H$, 
such as in an adjacency list,
with each vertex in $H$ labeled with its representative disk from $S$.
For each vertex $v$ in $H$, let $R(v)$ denote the set of vertices in $H$ that
are adjacent to $v$ and whose disk has radius no bigger that $v$'s disk.
We say that $G$ is \emph{inductively $k$-clustered} if,
for each vertex $v$ in $H$, the number 
of connected components in $R(v)$ is no more than $k$.
In the experimental section, we provide empirical evidence that
real-world road networks are inductively $O(1)$-clustered (although the
intersection graphs of their natural disk neighborhood systems do
not necessarily have bounded degree).
So, suppose that our given intersection graph, $H$,
is for the natural disk neighborhood system, $S$, of
an inductively $O(1)$-clustered multiscale-dispersed graph, $G$.

This characterization is justified by additional empirical evidence.
Indeed,
Figure~\ref{fig-degb} shows a scatter plot that supports the
claim that real-world road networks are inductively $O(1)$-clustered, that is, the number
of connected components of smaller disks neighboring a given disk is
$O(1)$.

\begin{figure}[htb]
\vspace*{-0.2in}
\begin{center}
\includegraphics[width=3.75in]{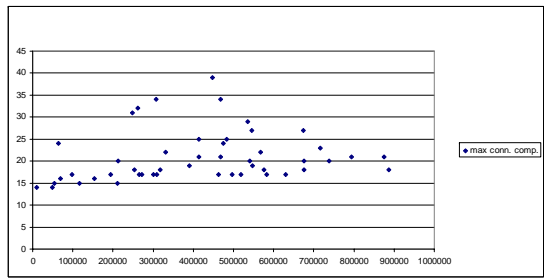} 
\end{center}
\vspace*{-1in}
\caption{
A scatter plot of the maximum number of connected components among smaller
disks, which gives evidence that each road network is inductively
$O(1)$-clustered.}
\label{fig-degb}
\end{figure}

By the way, an alternate
approach for designing a linear-time algorithm for constructing an arrangement from an
explicit representation of a disk network system is not supported by
the data.
Namely, it would be great if the vertex 
degrees in the disk neighborhood system were constant.
Unfortunately, as shown in Figure~\ref{fig-dega}, the distribution of maximum
disk-intersection degrees does not appear to be constant.
Indeed, it appears to be proportional to $\sqrt{n}$, which supports
the ``$O(\sqrt{n})$-exceptional'' part of the notion of road networks as 
being subgraphs of $O(\sqrt{n})$-exceptional $O(1)$-ply disk systems.

\begin{figure}[hbt]
\vspace*{-0.2in}
\begin{center}
\includegraphics[width=3.5in]{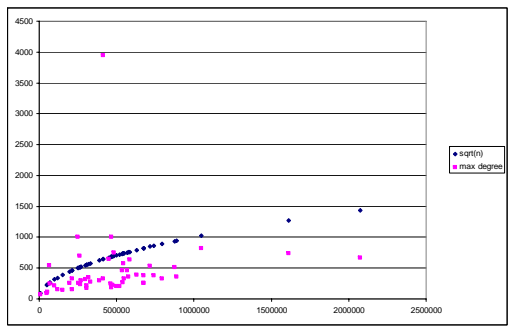} 
\end{center}
\vspace*{-0.75in}
\caption{
A scatter plot of the maximum vertex degrees for vertices
in the natural disk neighborhood systems  for TIGER/Line road
networks, indexed by network size (the outlier is Louisiana),
together with values for $\sqrt{n}$.
}
\label{fig-dega}
\end{figure}

The inductive $k$-clustering property
allows us to design a linear-time arrangement construction
algorithm for $S$.
Direct each edge in $H$ to go from $v$ to $w$ if the disk for $v$ is
smaller than the disk for $w$, with ties broken using the indices of $v$
and $w$, so that the resulting graph $D$ is a directed acyclic graph.
Note that, by Lemma~\ref{lemma-pairs},
$H$ has $O(n)$ edges; hence, $D$ has $O(n)$ edges.
Let us therefore perform a topological ordering of the vertices in $D$,
which will take $O(n)$ time.
Now, consider each vertex $v$
according to this topological ordering.
In processing a vertex $v$ in this order, we may
assume inductively that we have constructed the arrangement of all the
circles associated with vertices in $R(v)$.
Since there are $O(1)$ connected components in $R(v)$, each with its own
subarrangement, we may sort these components radially, using the center of
$v$'s circle as origin, in $O(1)$ time.
Note that these connected components don't intersect each other (by
definition).
So the arrangement of the circles in $R(v)$ and $v$'s circle can be
constructed in $O(|R(v)|)$ time by ``splicing'' $v$'s circle into this
arrangement using a traversal through each subarrangement (with the
subarrangements considered in turn according to their
radial order around $v$'s circle center).
This gives us the following.

\begin{lemma}
Suppose we are
given an $n$-vertex inductively $O(1)$-clustered 
multiscale-dispersed
graph, $G$, together with the intersection graph, $H$,
of $G$'s natural disk neighborhood system, $S$.
Then we can construct the arrangement of the
circles in $S$ in $O(n)$ time.
\end{lemma}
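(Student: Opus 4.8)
The plan is to build the arrangement incrementally, inserting the circles of $S$ one at a time in nondecreasing order of radius and charging the cost of each insertion to the edges of the intersection graph $H$ incident to the circle being inserted. By Lemma~\ref{lemma-pairs}, $H$ has only $O(n)$ edges, so if each circle can be spliced into the current arrangement in time proportional to its $H$-degree among the already-inserted circles, the whole construction runs in $O(n)$ time. The inductive $O(1)$-clustering hypothesis is exactly the ingredient that will let us bound each splice this way, since it controls how the old circles that meet a newly inserted one are grouped.

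First I would turn $H$ into a directed acyclic graph $D$ by orienting every edge from the smaller disk to the larger one, breaking ties by vertex index, and compute a topological order of $D$ in $O(n)$ time (this is $O(n)$ since $D$ has $O(n)$ edges by Lemma~\ref{lemma-pairs}). Processing the vertices in this order, when we reach a vertex $v$ the set of already-inserted circles that meet $v$'s circle is exactly $R(v)$: any circle intersecting $v$'s circle is an $H$-neighbor of $v$, and it has been inserted before $v$ precisely when its disk is no larger than $v$'s. By the inductive $O(1)$-clustering property, $R(v)$ breaks into at most $k=O(1)$ connected components, and by definition distinct components are pairwise non-intersecting, so the part of the arrangement built so far that actually interacts with $v$'s circle consists of $O(1)$ separate sub-arrangements.

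To insert $v$'s circle I would first sort these $O(1)$ components radially about the center of $v$'s circle in $O(1)$ time, and then splice $v$'s circle into each component's sub-arrangement in turn: walk $v$'s circle through the faces of that sub-arrangement, inserting a new arrangement vertex each time $v$'s circle crosses a circle of $R(v)$ and updating the incident vertex/edge/face records locally. Since two circles cross at most twice, $v$'s circle meets the circles of a given component $O(|R(v)|)$ times in all, so the walk through all $O(1)$ components touches $O(|R(v)|)$ faces and edges and costs $O(|R(v)|)$ time. Because $v$ lies in $R(w)$ exactly for the larger $H$-neighbors $w$ of $v$, we have $\sum_v |R(v)| = |E(H)| = O(n)$, giving the claimed $O(n)$ total running time (and incidentally reproving the $O(n)$ arrangement-complexity bound of Corollary~\ref{cor-arrangement}).

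The main obstacle is making the splicing step fully rigorous. Two points need care: (i) verifying that the region of the current arrangement spanned by one connected component of $R(v)$ can be isolated and traversed independently of the rest of the arrangement — a circle of $R(v)$ may be entangled with smaller circles that are not themselves neighbors of $v$, so one must maintain, with each inserted circle, a pointer into the sub-arrangement structure it currently belongs to, and argue these structures stay disjoint until merged; and (ii) bounding the number of faces the walk visits, which requires observing that the walk never revisits a face and that each face it enters can be charged to one of the $O(|R(v)|)$ crossing points that $v$'s circle contributes. Once this bookkeeping is pinned down, the remainder is a routine amortization of the total work against the $O(n)$ edges of $H$.
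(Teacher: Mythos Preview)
Your proposal is correct and follows essentially the same approach as the paper: orient $H$ from smaller disks to larger ones to get a DAG, process vertices in topological order, and at each vertex $v$ splice its circle into the $O(1)$ subarrangements corresponding to the connected components of $R(v)$, radially sorted about $v$'s center, charging the $O(|R(v)|)$ cost of the splice to the edges of $H$. The paper's own argument is at the same level of detail and similarly glosses over the bookkeeping issues you flag in points (i) and (ii), so your caveats are apt rather than gaps.
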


This lemma, together with Lemma~\ref{lemma-neighborly},
imply the following:

\begin{theorem}
Suppose we are given a bounded-degree $n$-vertex 
connected multiscale-dispersed graph, $G$, 
such that $G$ is $O(1)$-neighborly
and inductively $O(1)$-clustered.
Suppose further that we are given
a planarization, $G'$, of $G$.
Then we can construct the arrangement of the
circles in $G$'s natural disk neighborhood system, $S$, in $O(n)$ time.
\end{theorem}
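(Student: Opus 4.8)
The plan is to compose the two lemmas that immediately precede this theorem, since together they already cover the two phases of the construction. First I would check that the hypotheses of Lemma~\ref{lemma-neighborly} are met: $G$ is an $n$-vertex $O(1)$-neighborly, bounded-degree, connected multiscale-dispersed graph, and we are handed its planarization $G'$. Applying that lemma yields, in $O(n)$ time, an explicit representation (an adjacency-list structure) of the intersection graph $H$ of $G$'s natural disk neighborhood system $S$, with each vertex of $H$ tagged by its representative disk (center and radius). By Lemma~\ref{lemma-pairs}, $H$ has only $O(n)$ edges, so this representation occupies linear space and can be produced in $O(n)$ time.

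Second, I would feed this $H$ into the inductive-clustering arrangement algorithm of the lemma stated just above this theorem. Its hypotheses are exactly an $n$-vertex inductively $O(1)$-clustered multiscale-dispersed graph $G$ together with an explicit representation of the intersection graph $H$ of $S$ --- precisely the object produced in the first phase --- and $G$ is inductively $O(1)$-clustered by assumption. That lemma then builds the arrangement ${\cal A}(S)$ of the circles in $S$ in $O(n)$ time. Chaining the two linear-time phases gives an $O(n)$-time algorithm overall; moreover, by Corollary~\ref{cor-arrangement} the output has combinatorial complexity $O(n)$, so writing it down also costs only $O(n)$ time.

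The one thing I would actually check --- and what I would treat as the (mild) main obstacle --- is the interface between the two phases: Lemma~\ref{lemma-neighborly} must deliver $H$ in exactly the format the clustering lemma consumes, namely an adjacency list in which every vertex carries its disk, so that for each $v$ one can in $O(1)$ time identify the neighbors whose disks are no larger (the set $R(v)$) and radially sort its $O(1)$ connected components about $v$'s center. This follows directly from how the grid-augmentation search in Lemma~\ref{lemma-neighborly} discovers neighbors: for each $v$ it reports the at most $O(1)$ disks reachable within $k$ augmented-graph hops, and it can annotate each with its radius at no extra asymptotic cost. Hence no format reconciliation is needed, and the remaining work is routine bookkeeping.
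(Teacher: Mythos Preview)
Your proposal is correct and follows exactly the paper's approach: the theorem is obtained simply by composing Lemma~\ref{lemma-neighborly} (which builds $H$ from $G$ and its planarization) with the inductive-clustering lemma (which builds ${\cal A}(S)$ from $H$). Your additional remarks about the interface, Lemma~\ref{lemma-pairs}, and Corollary~\ref{cor-arrangement} are more detail than the paper itself gives, but they are accurate and harmless.
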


\section{Conclusions and Future Work}
We have given a new theoretical characterization of 
road networks as multiscale-dispersed graphs and we have shown how this
characterization leads to algorithms that provably run in linear
time for a number of interesting problems.
Our main techniques involve the construction of recursive
$O(\sqrt{n})$-separator decompositions and the applications of such
decompositions to Voronoi diagrams and shortest paths. 
In addition, 
we provided an algorithm for constructing the arrangement of the
natural disk neighborhood system of a road network, utilization some
additional properties of road networks, which are justified
empirically.

There are a number of interesting open problems and future research directions 
raised by this paper, including: 
\begin{itemize}
\item
Can one construct 
a recursive circle $O(\sqrt{n})$-separator decomposition of a
multiscale-dispersed graph deterministically in $O(n)$ time?
\item
Can one construct a trapezoidal decomposition of an $n$-vertex 
multiscale-dispersed graph 
in $O(n)$ time? (This is related to the well-known open problem of
computing a trapezoidal decomposition of an $n$-vertex non-simple polygon in
$O(n+k)$ time, where $k$ is the number of its edge crossings.)
\item
Are there a set of road network complexity measures, possibly based on
our definitions, that can replace the Alpha and Gamma indices~\cite{c-esag-96},
which are used in the transportation network literature and
are based on the flawed notion that road networks are plane graphs?
\end{itemize}
Of course, one can also perform additional experiments to empirically test
whether real-world road networks possess additional properties not
considered in this paper.

\subsection*{Acknowledgments}
This work was supported in part by the NSF, under grant 0830403,
and by the Office
of Naval Research, under grant N00014-08-1-1015.
With the exception of the images in Figure~\ref{fig-fractal}, 
all figures in this
paper are the property of the authors and are used by permission.

\begin{flushleft}
\bibliographystyle{abbrv}
\bibliography{extra,geom,../roads,roads2,goodrich}
\end{flushleft}

\end{document}